\newtheorem{theorem}{Theorem}[section]
\newtheorem{lemma}[theorem]{Lemma}
\theoremstyle{definition}
\newtheorem{rem}{Remark}
\author[*]{Kasturi Das}
\author[*]{S. Krishnaswamy}
\author[*]{S. Majhi}
\affil[*]{EEE Department, IIT Guwahati}
\begin{document}

\title{H$_2$-optimal model order reduction over a finite time interval}

\maketitle
\begin{abstract}
For a time-limited version of the H$_2$ norm defined over a fixed time interval, we obtain a closed form expression of the gradients.  After that, we use the gradients to propose a time-limited model order reduction method. The method involves obtaining a reduced model which minimizes the time-limited H$_2$ norm, formulated as a nonlinear optimization problem. The optimization problem is solved using standard optimization software. 
\end{abstract}

\section{Introduction}
Capturing system dynamics accurately requires large- scale, linear dynamical models. Simulating or analysing such models and designing controllers require considerable computational effort. Such issues are resolved by replacing the large model with a lower order approximation based on various performance measures. Model order reduction  techniques are used in a wide range of areas including computational aerodynamics, large-scale network systems, microelectronics, electromagnetic systems, chemical processes etc. \cite{ModelReduction3}. References \cite{ModelReduction1}, \cite{ModelReduction2} contain a comprehensive discussion on a large number of model reduction techniques available in literature.

 The H$_2$ norm of the error between the original and the reduced system acts as an important performance measure for obtaining reduced order models. In H$_2$ optimal model reduction the aim is to find a lower order model that minimizes this norm. Since finding global minimizers is a difficult task, the existing methods focus on finding local minimizers. These methods are divided into two categories: optimization-based methods and tangential interpolation methods. In optimization based methods, the task of model reduction is formulated as an optimization problem over various manifolds \cite{wilson1970optimum,yan1999approximate,sato2015riemannian,sato2017riemannian}. The solution yields optimal reduced models. Tangential interpolation methods use Krylov-based algorithms and work well for large-scale systems. Examples of such methods include Iterative Rational Krylov Algorithm (IRKA) \cite{gugercin2008h_2} and Two-Sided Iteration Algorithm (TSIA) \cite{xu2011optimal}. Reference \cite{petersson2014model} deals with optimization based frequency-limited H$_2$ optimal model reduction. 

 Availability of simulation data for a finite time interval or the need to approximate the system behaviour over a finite time interval led to the development of finite time model reduction methods. These include methods such as Proper Orthogonal Decomposition (POD) \cite{holmes2012turbulence}, Time-Limited Balanced Truncation (TL-BT) \cite{gawronski1990model} etc. Error bounds for TL-BT are proposed in \cite{redmann2018output, redmann2020lt2}.
References \cite{kurschner2018balanced} and \cite{duff2021numerical} deal with implementation of TL-BT for large-scale continuous and discrete systems respectively. Lyapunov based time-limited H$_2$ optimality conditions are obtained in \cite{goyal2019time} using a time-limited H$_2$ norm. The same paper proposes an iterative scheme similar to TSIA \cite{xu2011optimal}. We refer to this scheme as TL-TSIA. Projection-based algorithms like TL-BT and TL-TSIA fail to exactly satisfy the time-limited optimality conditions. Reference \cite{sinani2019h2} obtains interpolation based first-order necessary conditions for time-limited H$_2$ optimality and proposes an optimization algorithm named FHIRKA. This algorithm produces time-limited H$_2$ optimal models but is valid for SISO systems.

  To the best of the authors' knowledge, there are no optimization based algorithms in the literature that yield time-limited H$_2$ optimal reduced models for both SISO and MIMO systems. In this letter, we aim to fill this gap.
The time-limited H$_2$ optimal model reduction problem is formulated as an optimization problem. We derive closed-form expressions of the gradients of the objective function. These gradients are used with standard quasi-Newton solvers to propose a time-limited H$_2$ optimal model reduction method. We initialize the proposed method with reduced models obtained from time-limited projection based model reduction techniques. Two numerical examples show how our proposed method significantly improves the objective function compared to the projection based methods for reduced orders less than a certain upper bound. Due to space constraints, we have demonstrated a single example. However, we observe that the bound is different for different models. 
 
 The letter is arranged as follows. In Section \ref{sec:preliminaries} we discuss some basic concepts related to model order reduction over a limited time. In Section \ref{Problem Formulation}, we formulate time-limited H$_2$ optimal model reduction as an  
optimization problem and derive the gradients of the objection function in Section \ref{Gradients}. In Section \ref{Algorithm} we propose a method for solving the optimization problem by using the gradients and discussing its computational complexity in Section \ref{Computation Cost}. The proposed method is implemented on two numerical examples in Section \ref{Example}. We conclude the paper in Section \ref{Conclusion}.

\paragraph*{\textbf{Notations}} Let $\mathbb{R}$ and $\mathbb{C}$ be the set of real and complex numbers respectively. For a matrix $P \in \mathbb{R}^{n \times  n}$, Tr(P) denotes the trace, $P^T$ denotes the transpose, $\left\Vert P \right\Vert$ denotes the 2-norm and $\left\Vert P \right\Vert_{F}$ denotes the Frobenius norm of the matrix $P$. Let us consider a function $f:[0,\infty) \to \mathbb{R}^{p \times m}$ whose Laplace transform $F(s) \in \mathbb{C}^{p \times m}$ exists. The time-limited H$_2$ norm of $F$, denoted by $\left\Vert F \right\Vert_{\text{H}_{2,\tau}}$, is defined as $\sqrt{\int_{0}^{\tau}\left\Vert f(t) \right\Vert_{F}^2dt}$ where $\left\Vert f(t) \right\Vert_{F}^2 = Tr({f(t)}^T f(t)) $. The $L_{\infty}^{\tau}$ norm and $L_{2}^{\tau}$ norm of $f(t)$ are defined as follows,
\begin{align*}
    \left\Vert f \right\Vert_{L_{2}^{\tau}} &= \sqrt{\int_{0}^{\tau} \left\Vert f(t)\right\Vert^2 dt} \\
    \left\Vert f \right\Vert_{L_{\infty}^{\tau}} &= \operatorname*{sup}_{t \in [0,\tau]} \left\Vert f(t) \right\Vert
\end{align*}

\section{Preliminaries}
\label{sec:preliminaries}

A stable and strictly proper Linear Time-Invariant (LTI) system, $\Sigma$ is given by,
\begin{subequations}\label{eq1}
\begin{equation} 
\dot{x}(t) = Ax(t)+Bu(t), \, x(0)= 0,\\
\end{equation}
\begin{equation} 
y(t) = Cx(t), \, t \geq 0   
\end{equation}
\end{subequations}
where $A \in \mathbb{R}^{n \times n}$, $B \in \mathbb{R}^{n \times m}$ and $C \in \mathbb{R}^{p \times n}$. Let $H(s)$ be the transfer function and $h(t)$ be the impulse response. We assume that the state dimension $n$ is large and is much larger than the number of inputs and outputs, i.e.\ $n \gg m,p$. 

%

Over a limited time interval $\begin{bmatrix}0 & \tau \end{bmatrix}$ with $\tau<\infty$, time-limited gramians are defined in \cite{gawronski1990model} as follows,
\begin{align}
P_{\tau} = \int_{0}^{\tau} e^{At}BB^{T}e^{A^{T}t}dt, \quad Q_{\tau} = \int_{0}^{\tau} e^{A^{T}t}C^{T}Ce^{At}dt
\end{align}

The time-limited gramians  are solutions of the following Lyapunov equations 
\begin{align}
AP_{\tau}+P_{\tau}A^{T}+BB^{T}-e^{A\tau}BB^{T}e^{{A^T}\tau} &= 0 \\
A^{T}Q_{\tau}+Q_{\tau}A+C^{T}C-e^{A^{T}\tau}C^{T}Ce^{A\tau} &= 0
\end{align}  


%

The following expressions for $\|G \|^{2}_{\text{H}_{2,\tau}}$ are derived in \cite{goyal2019time}.
\begin{equation}\label{FullorderGramians}
\|G \|^{2}_{\text{H}_{2,\tau}} = \text{Tr } CP_{\tau}C^{T} = \text{Tr } B^{T}Q_{\tau}B 
\end{equation}

Since the H$_{2,\tau}$ norm is defined over a strictly finite time interval, a system need not be asymptotically stable inorder to have a finite H$_{2,\tau}$ norm. 

\section{Optimization based time-limited model reduction}
\label{Problem Formulation}

Consider a reduced order system $\hat{\Sigma}$ given by,
\begin{subequations}\label{eq2}
\begin{equation} 
\dot{x_r}(t) = A_r x_r(t)+B_ru(t), \, x_r(0)= 0,\\
\end{equation}
\begin{equation} 
y_r(t) = C_r x_r(t), \, t \geq 0   
\end{equation}
\end{subequations}
where $A_r \in \mathbb{R}^{r \times r}$, $B_r \in \mathbb{R}^{r \times m}$ and $C \in \mathbb{R}^{p \times r}$. Let $H_r(s)$ be the transfer function and $h_r(t)$ be the impulse response. It is essential that $r \ll n$ and $y-y_r$ is small for an appropriate time limited norm.

For all admissible inputs $u(t)$ with unity $L_{2}^{\tau}$ norm, the following relation holds \cite{goyal2019time}.
\begin{equation} \label{H2tau optimization}
 \left\Vert y(t)-y_r(t) \right\Vert_{L_{\infty}^{\tau}} \leq  \left\Vert G-G_r \right\Vert_{\text{H}_{2,\tau}}
\end{equation}
Thus, minimizing the H$_{2,\tau}$ error norm ensures that $y_r(t)$ is a good approximation of $y(t)$ over the time interval $\begin{bmatrix} 0 & \tau \end{bmatrix}$.

In this paper, we aim to obtain H$_{2,\tau}$ optimal reduced models by solving the following optimization problem:

\begin{equation}\label{H2tau optimal model reduction}
    \left\Vert G -G_r \right\Vert_{\text{H}_{2,\tau}} = \operatorname*{minimize \quad}_{\text{dim}(\hat{G})=r} \left\Vert G-\hat{G} \right\Vert_{\text{H}_{2,\tau}}
\end{equation}
The feasible set for the  optimization problem formulated above comprises of all the reduced order systems of the form (\ref{eq2}) with state dimension $r$. $\left\Vert G-G_r \right\Vert_{\text{H}_{2,\tau}}$ is defined as $\left(\bigintss_{0}^{\tau} \left\Vert Ce^{At}B-C_re^{A_r t}B_r \right\Vert_{F}^2 dt\right)^{\frac{1}{2}}$. \\
The error system $(G-G_r)$ can be represented by the following state-space realization.
\begin{equation}\label{Error System}
\{A_e,B_e,C_e \}=\left\{\begin{bmatrix}A & 0 \\ 0 & A_r \end{bmatrix}, \begin{bmatrix}B \\ B_r \end{bmatrix}, \begin{bmatrix}C & - C_r \end{bmatrix} \right\}
\end{equation}
As a consequence of (\ref{FullorderGramians}), the square of the H$_{2,\tau}$ norm of the above  realization can be expressed as
\begin{equation}\label{Error Cost Function}
\left\Vert G -G_r \right\Vert_{\text{H}_{2,\tau}}^2 = \text{Tr}({B_e}^{T}Q_{e,\tau}B_e)= \text{Tr}({C_e}P_{e,\tau}{C_e}^{T}) 
\end{equation}
Here, $P_{e,\tau}$ and $Q_{e,\tau}$ are the time-limited controllability and observability gramians and they satisfy the following Lyapunov equations,
\begin{subequations}
\begin{equation}\label{Time-limited Controllability Gramian}
    A_e P_{e,\tau}+P_{e,\tau}{A_e}^{T}+B_e{B_e}^{T}-e^{A_e\tau}B_e{B_e}^{T}e^{{A_e}^{T}\tau}=0
\end{equation}
\begin{equation}\label{Time-Limited Observability Gramian}
    {A_e}^{T} Q_{e,\tau}+Q_{e,\tau}{A_e}+{C_e}^{T} C_e-e^{{A_e}^{T}\tau}{C_e}^{T}C_e e^{A_e\tau}=0
    \end{equation}
\end{subequations}
For the realization (\ref{Error System}), the corresponding gramians $P_{e,\tau}$ and $Q_{e,\tau}$ can be partitioned as follows,
\begin{equation}\label{GramianPartition}
    P_{e,\tau} = \begin{pmatrix} P_\tau & X_\tau \\ X_\tau^{T} & P_{r,\tau} \end{pmatrix} \quad Q_{e,\tau} = \begin{pmatrix} Q_\tau & Y_\tau \\ Y_\tau^{T} & Q_{r,\tau} \end{pmatrix}
\end{equation}
Further the matrix $e^{A_e\tau}$ can be partitioned as follows,
\begin{equation}\label{MatrixExpPartition}
e^{A_e\tau} = \begin{bmatrix} e^{A\tau} & 0 \\ 0 & e^{A_r\tau} \end{bmatrix}    
\end{equation}
Substituting the partitions (\ref{GramianPartition}) and (\ref{MatrixExpPartition}) for $P_{e,\tau}$, $Q_{e,\tau}$ and $e^{A_e\tau}$ in equations (\ref{Time-limited Controllability Gramian}) and (\ref{Time-Limited Observability Gramian}) we get the following time-limited Lyapunov and time-limited Sylvester equations,
\begin{subequations}
\begin{equation}\label{Full Controllability Lyapunov Equation}
AP_\tau+P_{\tau}A^{T}+BB^{T}-e^{A\tau}BB^{T}e^{A^{T}\tau} = 0
\end{equation}
\begin{equation}\label{Controllability Sylvester Equation}
AX_\tau+X_\tau {A_r}^{T}+B{B_r}^{T}-e^{A\tau}B{B_r}^{T}e^{{A_r}^{T}\tau} = 0 
\end{equation}
\begin{equation}\label{Reduced Controllability Lyapunov Equation}
A_r P_{r,\tau}+P_{r,\tau}{A_r}^{T}+B_r{B_r}^{T}-e^{{A_r}\tau}{B_r}{B_r}^{T}e^{{A_r}^{T}\tau} = 0
\end{equation}
\begin{equation}\label{Full Observability Lyapunov Equation}
A^{T}Q_\tau+Q_\tau A+C^{T}C-e^{A^{T}\tau}C^{T}Ce^{A\tau} = 0 
\end{equation}
\begin{equation}\label{Observability Sylvester Equation}
A^{T}Y_\tau+Y_\tau A_r-C^{T}C_r+ e^{A^{T}\tau}C^{T}C_re^{A_r\tau} = 0 
\end{equation}
\begin{equation}\label{Reduced Observability Lyapunov Equation}
{A_r}^{T}Q_{r,\tau}+Q_{r,\tau}A_r+{C_r}^{T}{C_r}-e^{A_r^{T}\tau}{C_r}^{T}{C_r}e^{{A_r}\tau} = 0 
\end{equation}
\end{subequations}
Here, $P_{\tau}$ and $Q_{\tau}$ are the controllability and observability gramian respectively for the full order model (\ref{eq1}). $P_{r,\tau}$ and $Q_{r,\tau}$
are the controllability and observability gramian respectively for the reduced model (\ref{eq2})
Additionally, substituting the above partitions we can simplify (\ref{Error Cost Function}) as follows
\begin{subequations}
\begin{equation}\label{ErrorCostFunction1}
\begin{aligned}
    &\left\Vert G -G_r \right\Vert_{\text{H}_{2,\tau}}^2 \\
    &= \text{Tr}\left(C P_\tau C^{T}-2C X_\tau C_r^T+C_r P_{r,\tau} C_r^T\right)
\end{aligned}
\end{equation}
\begin{equation}\label{ErrorCostFunction2}
= \text{Tr}\left( B^{T} Q_\tau B+2B^T Y_\tau B_r + B_r^T Q_{r,\tau} B_r\right)
\end{equation}
\end{subequations}

\subsection{Gradients of the Cost Function}
\label{Gradients}
For a matrix valued function $f:\mathbb{R}^{m \times n} \to \mathbb{R}$, the gradient at $M \in \mathbb{R}^{m \times n}$ is another matrix ${\bigtriangledown} f(M) \in \mathbb{R}^{m \times n}$ which is given by Definition 3.1 of \cite{van2008h2}. The inner product of two matrices is given by $\langle A,B \rangle= \text{Tr}(A^{T}B)$. We now proceed to derive gradients of the objective function (\ref{ErrorCostFunction1} or \ref{ErrorCostFunction2}) with respect to the reduced system matrices. The following lemma from \cite{van2008h2} is essential for proving the subsequent theorem.

\begin{lemma}\label{Lemma1}
 If $AM+MB+C=0$ and $NA+B N+D=0$ then $\text{Tr}(C N)=\text{Tr}(D M)$. 
\end{lemma}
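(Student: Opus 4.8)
The plan is to eliminate $C$ and $D$ using the two given equations and then reduce the claim to the cyclic invariance of the trace. First I would solve the first equation $AM+MB+C=0$ for $C$, obtaining $C=-(AM+MB)$, and solve the second equation $NA+BN+D=0$ for $D$, obtaining $D=-(NA+BN)$. Substituting these into the two sides of the desired identity gives
\begin{align*}
\text{Tr}(CN) &= -\,\text{Tr}(AMN)-\text{Tr}(MBN),\\
\text{Tr}(DM) &= -\,\text{Tr}(NAM)-\text{Tr}(BNM).
\end{align*}

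The key step is then to apply the cyclic property of the trace, $\text{Tr}(XYZ)=\text{Tr}(ZXY)$, which gives $\text{Tr}(AMN)=\text{Tr}(NAM)$ and $\text{Tr}(MBN)=\text{Tr}(BNM)$. Comparing the two displayed lines term by term shows they coincide, so $\text{Tr}(CN)=\text{Tr}(DM)$, as claimed. Equivalently, one can form the difference $\text{Tr}(CN)-\text{Tr}(DM)$ and observe that the four resulting terms cancel in pairs under the same cyclic rearrangement.

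I do not expect any genuine obstacle here: the result is essentially a one-line computation once $C$ and $D$ are eliminated. The only things to be careful about are the dimensional compatibility of the products---if $A$ is $n\times n$ and $B$ is $k\times k$ then $M$ is $n\times k$ and $N$ is $k\times n$, so that both $CN$ and $DM$ are square and their traces are defined---and the fact that the two expressions match only after a cyclic reordering of factors, rather than by any termwise identity prior to permuting.
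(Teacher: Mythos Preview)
Your argument is correct: eliminating $C$ and $D$ and invoking the cyclic invariance of the trace is exactly the standard one-line proof, and your remarks on dimensional compatibility are on point. The paper itself does not supply a proof of this lemma---it simply quotes the result from \cite{van2008h2}---so there is nothing further to compare against.
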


\begin{theorem} \label{Theorem 1}
For the cost function $J = \left\Vert G-G_r \right\Vert_{H_{2,\tau}}^2$,  
the gradients with respect to $A_r$, $B_r$ and $C_r$ denoted by ${\nabla}_{A_r} J$, $ {\nabla}_{B_r} J$ and ${\nabla}_{C_r}J$ respectively are 
\begin{subequations}
\begin{equation}\label{GradA}
    {\nabla}_{A_r} J = 2(Q_{r,\tau}P_r+Y^{T}_\tau X+\tau (L(A_r \tau,S_{\tau})^T)) 
\end{equation}
\begin{equation}\label{GradB}
    {\nabla}_{B_r} J = 2(Q_{r,\tau}B_r+Y^{T}_\tau B)
\end{equation}
\begin{equation}\label{GradC}
 {\nabla}_{C_r} J = 2(C_r P_{r,\tau}-C X_\tau)
\end{equation}
\end{subequations}
where $P_{r,\tau}$, $Q_{r,\tau}$, $X_\tau$, and $Y_\tau$ are solutions of (\ref{Reduced Controllability Lyapunov Equation}), (\ref{Reduced Observability Lyapunov Equation}), (\ref{Controllability Sylvester Equation}), and (\ref{Observability Sylvester Equation}) respectively. $P_r$ and $X$ are obtained by solving the following Lyapunov and Sylvester equations.
\begin{subequations}
\begin{equation} \label{Extra Lyapunov Equation}
    P_r A_r^T + A_r P_r + B_r B_r^T = 0
\end{equation}
\begin{equation} \label{Extra Sylvester Equation}
    X^T A^T + A_r X^T + B_r B^T = 0
\end{equation}
\end{subequations}
Here, the function $L(X,Y)$ is the Fr{\'e}chet derivative of the matrix exponential of $X$ along the  direction $Y$ \cite{al2009computing}. $S_\tau$ is given by
\begin{equation}\label{Stau}
 S_{\tau} = \left(X^Te^{A^T\tau}C^T C_r-P_r e^{A_r^T \tau}C_r^TC_r \right)
 \end{equation}
\end{theorem}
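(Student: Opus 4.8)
The plan is to obtain each gradient as a directional derivative. For a perturbation $E$ of the matrix being differentiated I will compute the directional derivative $DJ[E]$ and match it to $\langle \nabla J, E\rangle = \text{Tr}((\nabla J)^T E)$, reading off $\nabla J$. The two equivalent expressions \eqref{ErrorCostFunction1} and \eqref{ErrorCostFunction2} are the starting point, and for each variable I use whichever form involves the fewest $\tau$-gramians that actually depend on that variable.

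The cases $\nabla_{C_r}J$ and $\nabla_{B_r}J$ are light. For $\nabla_{C_r}J$ I would use \eqref{ErrorCostFunction1} and observe that $P_\tau$, $X_\tau$ and $P_{r,\tau}$ solve equations not involving $C_r$, so they are frozen under the perturbation; only the explicit occurrences of $C_r$ in $\text{Tr}(-2CX_\tau C_r^T + C_r P_{r,\tau}C_r^T)$ contribute, giving $2(C_r P_{r,\tau} - C X_\tau)$ as in \eqref{GradC}. Symmetrically, \eqref{ErrorCostFunction2} gives $\nabla_{B_r}J$ because $Q_\tau$, $Y_\tau$, $Q_{r,\tau}$ do not depend on $B_r$; differentiating $\text{Tr}(2 B^T Y_\tau B_r + B_r^T Q_{r,\tau}B_r)$ yields $2(Q_{r,\tau}B_r + Y_\tau^T B)$, matching \eqref{GradB}. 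These require only elementary trace-derivative identities.

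The real work, and the main obstacle, is $\nabla_{A_r}J$, because $A_r$ enters both through the solutions $Y_\tau$, $Q_{r,\tau}$ of \eqref{Observability Sylvester Equation} and \eqref{Reduced Observability Lyapunov Equation} and through the exponential $e^{A_r\tau}$ sitting inside those same equations. Starting from \eqref{ErrorCostFunction2} gives $DJ[E] = 2\,\text{Tr}(B_r B^T \delta Y_\tau) + \text{Tr}(B_r B_r^T \delta Q_{r,\tau})$, where $\delta Y_\tau$ and $\delta Q_{r,\tau}$ are the induced perturbations. Differentiating \eqref{Observability Sylvester Equation} and \eqref{Reduced Observability Lyapunov Equation} in the direction $E$ gives Sylvester/Lyapunov equations for $\delta Y_\tau$ and $\delta Q_{r,\tau}$ whose right-hand sides split into a part built directly from $E$ and a part built from the directional derivative of $e^{A_r\tau}$, which equals $\tau L(A_r\tau,E)$ by linearity of the Fr\'echet derivative. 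I then invoke Lemma~\ref{Lemma1}: eliminating $\delta Q_{r,\tau}$ introduces the adjoint $P_r$ solving $A_r P_r + P_r A_r^T + B_r B_r^T = 0$, i.e.\ \eqref{Extra Lyapunov Equation}, and eliminating $\delta Y_\tau$ introduces $X^T$ solving \eqref{Extra Sylvester Equation}. This is precisely where the ordinary (non-time-limited) matrices $P_r$ and $X$ enter; choosing the observability form \eqref{ErrorCostFunction2} is what makes the adjoints come out as these controllability-type equations.

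After applying the lemma, the direct parts assemble, via cyclicity of the trace and symmetry of $P_r$ and $Q_{r,\tau}$, into $2(Q_{r,\tau}P_r + Y_\tau^T X)$, mirroring the classical infinite-horizon Wilson conditions. All the exponential parts carry the factor $\tau L(A_r\tau,E)$ and collapse, after further cyclic and transpose manipulations, into $2\tau\,\text{Tr}(S_\tau\,L(A_r\tau,E))$ with $S_\tau$ as in \eqref{Stau}. The delicate final step is to rewrite this in the form $\text{Tr}(W^T E)$ so the gradient can be extracted. For this I will prove the trace identity $\text{Tr}(S\,L(X,E)) = \text{Tr}(L(X,S)\,E)$ from the integral representation $L(X,E) = \int_0^1 e^{sX}E\,e^{(1-s)X}\,ds$ combined with the substitution $s \mapsto 1-s$; it produces the contribution $2\tau\,L(A_r\tau,S_\tau)^T$ and completes \eqref{GradA}. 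I expect the bookkeeping of the several exponential terms and the justification of this Fr\'echet-derivative trace identity to be the hardest parts, whereas the direct terms follow the standard $H_2$ argument.
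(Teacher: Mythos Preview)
Your proposal is correct and follows essentially the same route as the paper: you use \eqref{ErrorCostFunction1} for $\nabla_{C_r}J$ and \eqref{ErrorCostFunction2} for $\nabla_{B_r}J$ and $\nabla_{A_r}J$, invoke Lemma~\ref{Lemma1} with the auxiliary equations \eqref{Extra Lyapunov Equation}--\eqref{Extra Sylvester Equation} to eliminate $\delta Y_\tau$ and $\delta Q_{r,\tau}$, and finish with the Fr\'echet-derivative trace identity $\text{Tr}(S\,L(X,E)) = \text{Tr}(L(X,S)\,E)$. The paper simply states this last identity (equation \eqref{FrechetExp}) without justification, so your plan to prove it via the integral representation and the substitution $s\mapsto 1-s$ is in fact more explicit than the original.
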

\begin{proof} 
Consider the expression (\ref{ErrorCostFunction2}) of the cost function.
For a perturbation of $\Delta_{A_r}$ in $A_r$, the corresponding first-order perturbation in $J$ denoted by $\Delta_J^{A_r}$ is,
\begin{equation}\label{Differential J}
\begin{aligned}
    \Delta_J^{A_r} 
             &= \text{Tr}\left(2B_r B^T\Delta_{Y_\tau}+B_r B_r^T \Delta_{Q_{r,\tau}}  \right)
\end{aligned}
\end{equation}
$\Delta_{Y_\tau}$, $\Delta_{Q_{r,\tau}}$, $\Delta_{e^{A_r\tau}}$  are the perturbations in $Y_{\tau}$, $Q_{r,\tau}$ and $e^{A_r\tau}$ respectively due to the perturbation $\Delta_{A_r}$ in $A_r$. 
The relation between the perturbations $\Delta_{Y_\tau}$ and $\Delta_{A_r}$ is through equation (\ref{Observability Sylvester Equation}).
\begin{equation}\label{Perturbation in Ytau}
    A^T\Delta_{Y_\tau}+\Delta_{Y_\tau}A_r+Y_\tau \Delta_{A_r}+e^{A^T\tau}C^{T}C_r\Delta_{e^{A_r \tau}} = 0
\end{equation}
Similarly, the relation between the perturbation $\Delta_{Q_{r,\tau}}$ and $\Delta_{A_r}$ is through equation (\ref{Reduced Observability Lyapunov Equation}).
\begin{equation}\label{Perturbation in Prtau}
\begin{aligned}
    A_r^T\Delta_{Q_{r,\tau}}+\Delta_{Q_{r,\tau}}A_r+\Delta_{A_r}^T Q_{r,\tau}+Q_{r,\tau}\Delta_{A_r}- \\\Delta_{e^{A_r\tau}}^T C_r^T C_r e^{A_r \tau}-
    e^{A_r^T}C_r^T C_r \Delta_{e^{A_r \tau}} = 0
\end{aligned}
\end{equation}
Applying Lemma \ref{Lemma1} for equations (\ref{Extra Sylvester Equation}) and (\ref{Perturbation in Ytau}) we get,
\begin{equation}\label{Differential J Term 1}
    \text{Tr}\left( Y_\tau \Delta_{A_r}X^T + e^{A^T\tau}C^T C_r\Delta_{e^{A_r \tau}} X^T \right) = \text{Tr}\left( B_r B^T \Delta_{Y_\tau}\right)
\end{equation}
Similarly considering (\ref{Extra Lyapunov Equation}) and (\ref{Perturbation in Prtau}) and using Lemma \ref{Lemma1} we get,
\begin{equation}\label{Differential J Term 2}
\begin{aligned}
    \text{Tr}( \Delta_{A_r}^T Q_{r,\tau}P_r+Q_{r,\tau}\Delta_{A_r}P_r-\Delta_{e^{A_r \tau}}^TC_r^TC_re^{A_r \tau}P_r- \\
    e^{A_r^T \tau}C_r^T C_r \Delta_{e^{A_r \tau}}P_r ) 
    = \text{Tr}\left( B_r B_r^T \Delta_{Q_{r,\tau}} \right) 
    \end{aligned}
\end{equation}
Using (\ref{Differential J Term 1}) and (\ref{Differential J Term 2}) in the expression (\ref{Differential J}) we get,
\begin{equation}\label{DeltaJAr}
\begin{split}
 \Delta_J^{A_r}&= 
           2\text{Tr}\left[ \left(X^T Y_\tau +P_r Q_{r,\tau} \right)\Delta_{A_r}\right]\\
          &+ 2\text{Tr}\left[ \left(X^Te^{A^T\tau}C^T C_r-P_r e^{A_r^T \tau}C_r^TC_r \right)\Delta_{e^{A_r \tau}} \right]
\end{split}
\end{equation}
From (\ref{Stau}), we get $2\text{ Tr}\left[ \left(X^Te^{A^T\tau}C^T C_r-P_r e^{A_r^T \tau}C_r^TC_r \right)\Delta_{e^{A_r \tau}} \right]=2\text{ Tr}\left[S_{\tau}\Delta_{e^{A_r \tau}} \right]$.\\
The  Fr{\'e}chet derivative of matrix exponential $f(A)=e^A$ along a perturbation matrix $E$ is defined as,
\begin{equation}\label{Frechet Derivative}
    L(A_r,E) = \int_{0}^{1} e^{A_r(1-s)}Ee^{A_r s}ds
\end{equation}
 Using the Fr{\'e}chet derivative expression (\ref{Frechet Derivative}) we can express  $2\text{ Tr}\left[S_{\tau}\Delta_{e^{A_r \tau}} \right]$ as follows,
\begin{equation}\label{FrechetExp}
\begin{split}
    2\text{ Tr}\left[S_{\tau}\Delta_{e^{A_r \tau}} \right] &= 
      2\tau \text{ Tr}\left[ L(A_r \tau,S_{\tau})\Delta_{A_r} \right]
\end{split}
\end{equation}
Using (\ref{FrechetExp}), we can rewrite ($\ref{DeltaJAr}$) as follows
\begin{equation}\label{DifferentialJA}
\begin{aligned}
   \Delta_J^{A_r} &= 2 \text{ Tr}\left[ \left(X^T Y_\tau+P_r Q_{r,\tau}\right)\Delta_{A_r}\right]+2\tau \text{Tr } \left(L(A_r\tau,S_\tau) \Delta_{A_r} \right) \\
     &= \langle 2({Y_\tau}^T X+Q_{r,\tau}P_r+\tau (L(A_r \tau,S_{\tau})^T) ), \Delta_{A_r}   \rangle
\end{aligned}
\end{equation}
Using the relation $\Delta_J^{A_r} = \langle \Delta_{A_r}J,\Delta_{A_r} \rangle$ and (\ref{DifferentialJA}), we obtain (\ref{GradA}).

 To get $\nabla_{B_r} J$, we perturb $B_r$ in the cost expression (\ref{ErrorCostFunction2}). The resulting first order perturbation is given by,
\begin{equation}
\begin{aligned}
    \Delta_J^{B_r} &= \text{Tr}\left( 2B^T Y_\tau \Delta_{B_r} + \Delta_{B_r}^{T}Q_{r,\tau}B_r+B_r^T Q_{r,\tau}\Delta_{B_r} \right) \\
    &= \langle 2\left( Y_\tau ^T B + Q_{r,\tau}B_r \right),\Delta_{B_r}\rangle
\end{aligned}
\end{equation}
Utilizing the relation $\Delta_J^{B_r} = \langle \nabla_{B_r}J, \Delta_{B_r} \rangle$, we have
(\ref{GradB}).

Consider the error cost expression (\ref{ErrorCostFunction1}). The first order perturbation in $J$ due to perturbation $\Delta_{C_r}$ of $C_r$ is,
\begin{equation}
\begin{aligned}
    \Delta_J^{C_r} &= \text{Tr}\left( -2C X_\tau \Delta_{C_r}^T + \Delta_{C_r}P_{r,\tau}C_r^T + C_r P_{r,\tau}\Delta_{C_r}^T \right) \\
             &= \text{Tr}\left( 2\left(C_r P_{r,\tau}-C X_\tau \right)^T \Delta_{C_r}  \right)\\
             &= \langle 2\left(C_r P_{r,\tau}-C X_\tau \right), \Delta_{C_r} \rangle
\end{aligned}    
\end{equation}
From the above expression we get $\nabla_{C_r} J$ (\ref{GradC}). 

\end{proof}


\begin{rem}\label{Remark 1}
Reference \cite{goyal2019time} derives Lyapunov based and \cite{sinani2019h2} derives interpolation based first-order necessary conditions for time-limited H$_2$ optimality. However, both these derivations require that $A_r$ should be diagonalizable. 
 Setting the gradients that we derived in Theorem \ref{Theorem 1} to zero gives us another set of Lyapunov based or Wilson's first-order necessary condition for H$_{2,\tau}$ optimality. We do not assume diagonalizability of $A_r$ for deriving the optimality conditions. It can be easily verified that our optimality conditions are equivalent to the Lyapunov based optimality conditions of \cite{goyal2019time} if $A_r$ is diagonalizable. Further, diagonalizability of $A_r$ also ensures that our optimality conditions are equivalent to the interpolation based H$_{2,\tau}$ optimality conditions derived in \cite{sinani2019h2}. We prove this in the Appendix. 
\end{rem}
\subsection{A Numerical method for \text{H}$_{2,\tau}$ model reduction}
\label{Algorithm}
The optimization problem (\ref{H2tau optimal model reduction}) considers $J=\left\Vert G-G_r\right\Vert_{\text{H}_{2,\tau}}^2$ as the objective function and $\{A_r,B_r,C_r\}$ as the optimization variables.  The optimization problem is nonlinear and non-convex. Hence finding global minimizers is difficult. However, we can use standard nonlinear optimization techniques with good initial conditions to obtain local minimizers \cite{nocedal2006numerical}.  

In this work, we solve the above optimization problem using standard quasi-Newton solvers by employing the MATLAB function 'fminunc'. The Hessian matrix is updated by means of the  Broyden-Fletcher-Goldfarb-Shanno (BFGS) algorithm . The gradients required for the BFGS algorithm are calculated using the closed form expressions  derived in Section \ref{Gradients} (Equations \ref{GradA}), (\ref{GradB}) and (\ref{GradC}). Due to the non-convex nature of the optimization problem, a good starting point is very necessary to solve the optimization problem. We use TL-BT and TL-TSIA to reduce the original model and use the reduced model to initialize the optimization problem. When the convergence criteria becomes less than a preset error tolerance, the iterations are stopped. We name the proposed time-limited model reduction method as TL-H$_2$Opt.     

\begin{rem} \label{Remark 2}
From Remark \ref{Remark 1}, we note the equivalence of the Lyapunov based and interpolation based frameworks of optimality conditions. Thus, for a fixed finite-time interval $\begin{bmatrix}0 & \tau \end{bmatrix}$ we require a minimum of $r(m+p)$ parameters for representing the H$_{2,\tau}$ optimality conditions. This is similar to the infinite interval case \cite{van2008h2}. We have $r^2+r(m+p)$ parameters in our optimization problem, which leads to overparametrization. However, this does not impede obtaining better H$_{2,\tau}$ optimal models due to the nature of the quasi-Newton solvers used for solving the optimization problem as observed in \cite{petersson2014model, mckelvey1997system}. 
\end{rem}

\subsection{Computational Cost} 
\label{Computation Cost}
 We now discuss the computational cost of the proposed method, TL-H$_2$Opt.  
Computation of $P_\tau$ and $Q_\tau$ requires solving the Lyapunov equations (\ref{Full Controllability Lyapunov Equation}) and (\ref{Full Observability Lyapunov Equation}) respectively. The equations are solved by the MATLAB function 'lyap'. The underlying algorithm for 'lyap' has a computational complexity of $\mathcal{O}(n^3)$. The computation of these quantities is costly. However, they are independent of optimization variables ($\{A_r,B_r,C_r\}$) and hence need to be computed only once before the start of the optimization process. The terms $P_{r,\tau}$ and $Q_{r,\tau}$ are dependent on the optimization variables and need to be computed at every iteration of the optimization process. Both these terms have computing cost $\mathcal{O}(n_r^3)$. Since $n_r \ll n$, the reduced order gramians are not computationally heavy. The exponential term $e^{A\tau}$ is computed with the MATLAB function 'expm' and has a high computation complexity of $\mathcal{O}(n^3)$. However, this term needs to be computed only once since it doesn't involve the optimization variables. The terms $e^{A_r \tau}$ and $L(e^{A_r \tau}, S_{\tau})$ include optimization variables and need to be computed at every iteration of the optimization process. For computing these terms, we use Algorithm 3 of \cite{al2009computing} which has a computational cost of $\mathcal{O}(n_r^3)$. 

The terms $Y_{\tau}$, $X_{\tau}$ and $X$ are required for calculating the cost function as well as the gradients and have to be computed at every iteration of the optimization problem. They are solutions of the Sylvester equations (\ref{Observability Sylvester Equation}), (\ref{Controllability Sylvester Equation}) and (\ref{Extra Sylvester Equation}) respectively. Computing them with the 'lyap' function in MATLAB costs $\mathcal{O}(n^3)$. This method of computing $Y_{\tau}$ and $X$ works for medium scale systems (order $<$ 1000) but becomes computationally expensive for large-scale models (order $>$ 1000). We can speed up the computations by using Algorithm 3 of \cite{benner2011sparse} to compute the Sylvester matrices. In this case, the cost of solving the Sylvester equations is much less than $\mathcal{O}(n^3)$ if the matrix $A$ is diagonal or has some sparse structure.   

\section{NUMERICAL EXAMPLES}
\label{Example}
In this section, we investigate the performance of the proposed algorithm TL-H$_2$Opt using two numerical examples. The first example is a SISO model of a beam with order 348. The second example  is a MIMO model of the International Space Station (ISS) with three inputs and three outputs and order 270. The examples are taken from \cite{chahlaoui2005benchmark}. The simulations are done in MATLAB version 8.3.0.532(R2014a) on a Intel(R) Core(TM) i5-6500 CPU @ 3.20GHz 3.19 GHz system with 16 GB RAM. We reduce the models over fixed finite time intervals. Using TL-TSIA and TL-BT, reduced models are obtained. These reduced models are further used to initialize TL-H$_2$Opt. The improvement in performance is assessed using the quantity $\Delta \text{Err}(\%)$ defined as
\begin{equation}\label{Improv}
\Delta \text{Err}(\%)=\frac{\text{Err\textsubscript{Alg}-Err\textsubscript{Opt}}}{\text{Err\textsubscript{Alg}}} \times 100 \%
\end{equation}
where Err\textsubscript{Alg} is the H$_{2,\tau}$ approximation error obtained by the algorithm Alg and  Err\textsubscript{Opt} is the error obtained by TL-H$_2$Opt with Alg initialization. The algorithm Alg may refer to either TL-TSIA or TL-BT in our case.
\subsection{Beam Example}
The first example is a clamped beam model of order 348 with one input and one output. We fix a time interval of $\begin{bmatrix} 0 & 1 \end{bmatrix}$. For this time interval, we use algorithms TL-TSIA and TL-BT to obtain reduced order approximations with $r$ varying from $r=2$ to $r=21$ increasing the value of $r$ one at a time. We use these low order models to initialize the TL-H$_2$Opt algorithm. The first subplot of Figure \ref{Figure 1} displays the H$_2(\tau)$ errors for the reduced models obtained by TL-TSIA and TL-H$_2$Opt with TL-TSIA initialization while the second displays the improvement in performance of the optimization based algorithm over TL-TSIA given by (\ref{Improv}) with Alg = TL-TSIA. Similarly, the first subplot of Figure \ref{Figure 2} compares the approximation errors due to TL-BT and TL-H$_2$Opt with TL-BT initialization and second shows the improvement in performance of TL-BT due to the time limited H$_2$ optimization algorithm. 

The reduced models obtained by TL-TSIA and TL-BT do not satisfy the H$_2(\tau)$ optimality conditions exactly[ref]. The reduced models obtained using TL-H$_2$Opt with TL-TSIA and TL-BT initialization improve the H$_2(\tau)$ approximation errors as evident from Figure \ref{Figure 1} and Figure \ref{Figure 2}. For $r=4, 18, 20$, the $H_2(\tau)$ approximation error due to TL-TSIA is high or it doesn't converge and hence errors corresponding to those orders are not included in the first subplot of Figure \ref{Figure 1}. For reduced models of order ($r$) less than $16$ some of the reduced orders show good improvement in the H$_2(\tau)$ errors; for instance in case of TL-TSIA $r=5,9$ and $10$ show an improvement of $79.82 \%$, $76.26 \%$ and $91.92 \%$ respectively. For TL-BT  $r=5,8,9$ and $14$ have their H$_2(\tau)$ errors reduced by $70.90 \%$, $74.74 \%$, $80.71 \%$ and $70.94 \%$ percent. Beyond order $r=16$, the optimization algorithm doesn't lead to any significant improvement in the H$_2(\tau)$ approximation errors for both TL-TSIA and TL-BT initialization. 

\begin{figure}[thpb]
\centering
\includegraphics[width=\columnwidth]{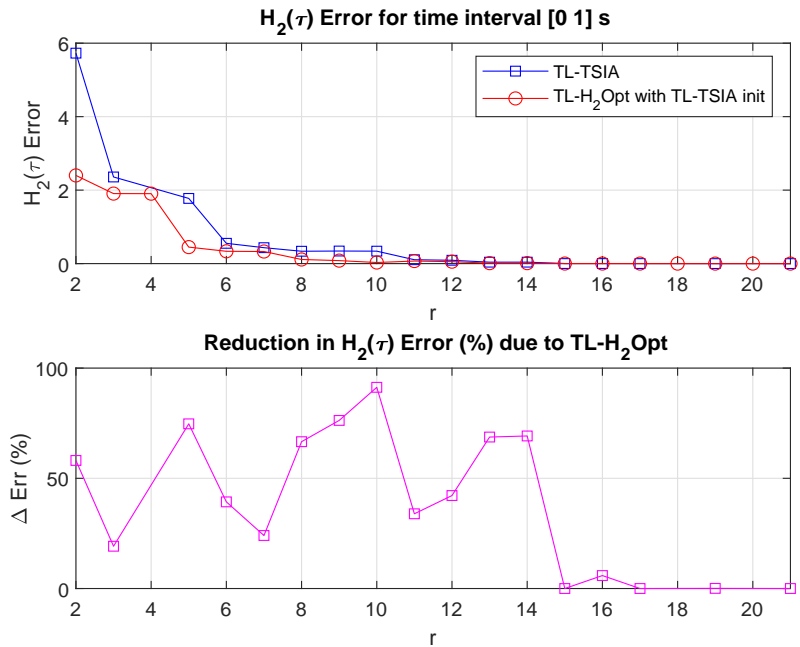}
\caption{Performance of TL-H$_2$Opt with TL-TSIA init for Beam Example}
\label{Figure 1}
\end{figure}

\begin{figure}[thpb]
\centering
\includegraphics[width=\columnwidth]{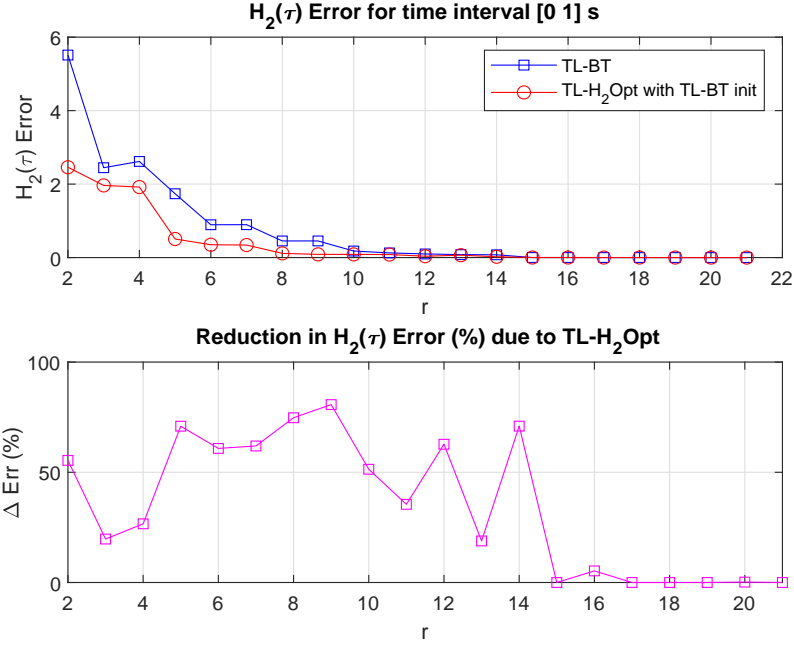}
\caption{Performance of TL-H$_2$Opt with TL-BT init for Beam Example}
\label{Figure 2}
\end{figure}

\subsection{ISS Example}
The second example is a MIMO model of the International Space Station (ISS) with three inputs and three outputs. For this example we consider a time interval of $\begin{bmatrix}0 & 0.5\end{bmatrix}$.

The H$_{2,\tau}$ approximation errors for the reduced models of various orders obtained with TL-TSIA and  TL-H$_2$Opt with TL-TSIA initialization along with performance improvement $\Delta \text{Err}(\%)$(\ref{Improv}) due to the optimization procedure is shown in Figure \ref{Figure 3}. Similar comparisons for TL-BT and TL-H$_2$Opt with TL-BT initialization are shown in Figure \ref{Figure 4}. We observe that the H$_{2,\tau}$ errors are improved considerably due to the application of TL-H$_2$Opt for reduced orders less than $r=38$ for both TL-TSIA and TL-BT initialization. There is no substantial improvement in the performance of TL-TSIA and TL-BT due to the application of TL-H$_2$Opt for reduced models of order greater than $r=38$.

The reduced models obtained by applying TL-TSIA for orders $r=4$ and $r=12$ have high H$_{2,\tau}$ approximation errors and are not shown in Figure \ref{Figure 3}. Initializing the optimization algorithm TL-H$_2$Opt by the reduced model obtained with TL-TSIA and solving the optimization problem improves the H$_{2,\tau}$ errors for all concerned reduced model orders ($r<38$) including $r=4$ and $r=12$ as evident from the second subplot of Figure \ref{Figure 3}. Apart from the case of $r=4$ and $r=12$ where the performance improvement is nearly $100\%$, the reduced order models with $r=22,26,28,30,32$ have performance improvement of over $70 \%$. Unlike the TL-TSIA initialization, there is no considerably high H$_{2,\tau}$ errors for any reduced order $r$ due to TL-BT initialization. Due to the application of the time-limited optimization algorithm with TL-BT reduced models as initial points, the H$_{2,\tau}$ error reduces as evident from Figure \ref{Figure 4}. The decrease in H$_{2,\tau}$ error for $r=18,20,22,24,26,28,32$ is more than $70 \%$.  

\begin{figure}[H]
\centering
\includegraphics[width=\columnwidth]{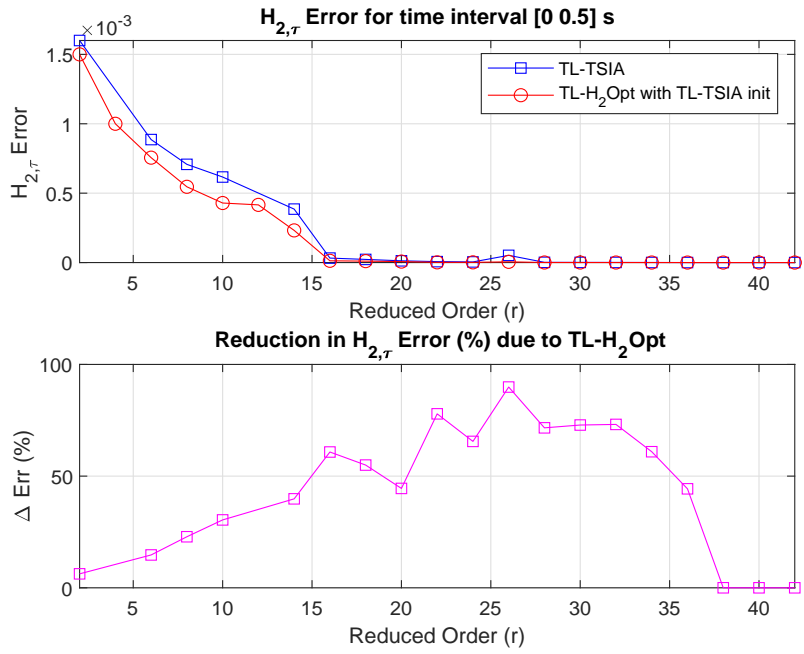}
\caption{Performance of TL-H$_2$Opt with TL-TSIA init for ISS Example}
\label{Figure 3}
\end{figure}

\begin{figure}[H]
\centering
\includegraphics[width=\columnwidth]{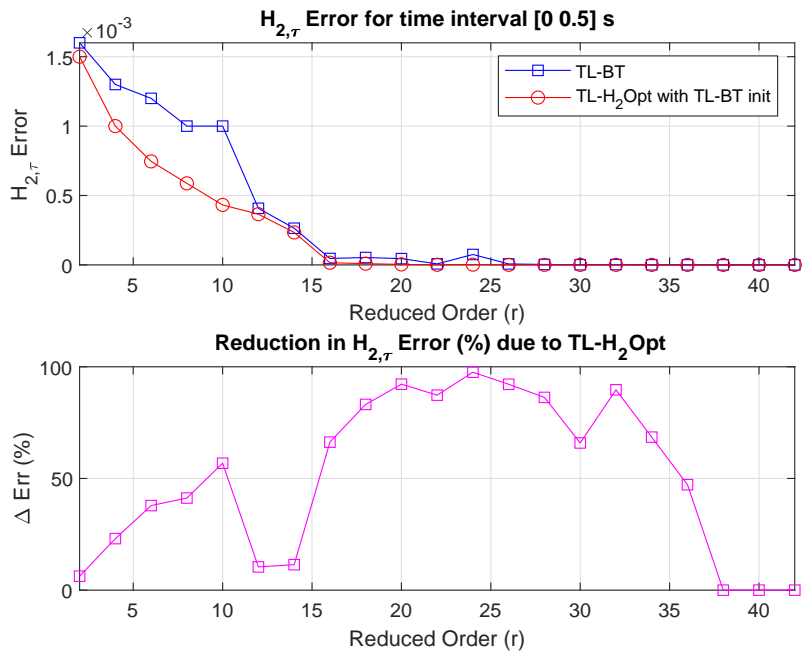}
\caption{Performance of TL-H$_2$Opt with TL-BT init for ISS Example}
\label{Figure 4}
\end{figure}

\section{Conclusion} \label{Conclusion}
In this work, we propose an optimization based method to obtain H$_{2,\tau}$ optimal reduced models. We derive closed form expressions of the gradients of an objective function defined over a limited time interval. The gradients are used with standard optimization algorithms for minimizing the objective function. The model reduction method proposed involves two steps. We first obtain a reduced model via TL-TSIA or TL-BT and then use the reduced model to initialize the optimization algorithm. Through a numerical example, we demonstrate the superiority of the TL-H$_2$Opt algorithm over TL-TSIA and TL-BT in obtaining better H$_{2,\tau}$ optimal reduced models.


\bibliographystyle{plain}
\bibliography{references}

\section*{Appendix}


\textbf{Equivalence of Lyapunov and Tangential Interpolation based H$_2(\tau)$ Optimality conditions}
    
The partial fraction expansion of the transfer function matrix $H_r(s)$ of system (\ref{eq2}) is given by
\begin{equation}\label{PartialFrac}
H_r(s) = \sum_{i=1}^{r} \frac{c_i b_i}{s-\lambda_i}
\end{equation}
where $\lambda_i \in \mathbb{C}$, $c_i \in \mathbb{C}^{p \times 1}$, $b_i \in \mathbb{C}^{1\times m}$ and $(\lambda_i,b_i,c_i)$ for $i=1,2,\hdots,r$ is self conjugate. Let, $\lambda_i^*$, $c_i^T$ and $b_i^T$ be the conjugate transpose of $\lambda_i$, $c_i$ and $b_i$ respectively. Let $v_i$ and $w_i$ be the right and left eigenvector respectively of the matrix $A_r$ corresponding to the eigenvalue $\lambda_i$. The following relations hold. 
\begin{equation}
    A_r v_i = \lambda_i v_i, C_r v_i = c_i, w_i^{T}A_r = \lambda_i w_i^{T}, w_i^{T}B_r = b_i
\end{equation}
For the transfer function $H(s)$ of system (\ref{eq1}), the time-limited counterpart over the time-interval $\begin{bmatrix}0 & \tau \end{bmatrix}$ is given by \cite{sinani2019h2} as,
\begin{equation}\label{time-limited transfer function}
    H_{\tau}(s) = C (sI_n-A)^{-1}(I_n-e^{-s\tau}e^{A\tau})B
\end{equation}
$H_{r,\tau}(s)$ is defined similarly.
We denote the identity matrix of size $n$ and $r$ as $I_n$ and $I_r$ respectively and
define the matrix $V = \begin{bmatrix}v_1 & v_2 & \hdots & v_r.  \end{bmatrix}$. 
\begin{theorem} \label{Theorem 2}
Let $H_r(s)$ given by (\ref{PartialFrac}) have $r$ distinct first order poles. $H_{\tau}(s)$ and $H_{r,\tau}(s)$ are the time limited transfer functions of $H(s)$ and $H_r(s)$ respectively over the time interval [0 $\tau$]. Then for $i = 1,2,\hdots,r$, $j=1,2,\hdots,r$ and $i \neq j$
\begin{subequations}
\begin{equation}\label{RightTangnetialError}
    \frac{1}{2}\left( \nabla_{B_r}J  \right)^T v_i = \left[ H_{r,\tau}^T(-\lambda_i^*)-H_{\tau}^T(-\lambda_i^*) \right]c_i
\end{equation}
\begin{equation}\label{LeftTangentialError}
     \frac{1}{2}w_i^T \left( \nabla_{C_r} J \right)^T = b_i \left[ H_{r,\tau}^T(-\lambda_i^*)-H_{\tau}^T(-\lambda_i^*) \right]
\end{equation}
\begin{equation}\label{BiTangentialError}
     \frac{1}{2}w_i^T\left( \nabla_{A_r}J \right)^T v_i = b_i \left. \frac{d}{ds}\left[ H_{\tau}^T(s)- H_{r,\tau}^T(s)\right] \right\vert_{s=-\lambda_i^*}c_i
\end{equation}
\begin{equation}\label{OverParametrizationError}
    \frac{1}{2}w_i^T \left( \nabla_{A_r}J \right)^T v_j = \frac{1}{2(\lambda_i-\lambda_j)}\left[ b_i\left(\nabla_{B_r} J \right)^T v_j -w_i^T \left( \nabla_{C_r} J \right)^T c_j \right]
\end{equation}
\end{subequations}
\end{theorem}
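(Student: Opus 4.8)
My plan is to project the closed-form gradients (\ref{GradA})--(\ref{GradC}) of Theorem \ref{Theorem 1} onto the right and left eigenvectors of $A_r$, so that each defining Lyapunov or Sylvester equation collapses into a vector identity that can be read off directly as a value (or $s$-derivative) of the time-limited transfer functions (\ref{time-limited transfer function}). The engine of the whole argument is the eigen-relations $A_r v_i=\lambda_i v_i$ and $w_i^T A_r=\lambda_i w_i^T$, their exponential consequences $e^{A_r\tau}v_i=e^{\lambda_i\tau}v_i$ and $w_i^T e^{A_r\tau}=e^{\lambda_i\tau}w_i^T$, together with the modal data $C_r v_i=c_i$ and $w_i^T B_r=b_i$.

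For (\ref{RightTangnetialError}) I would right-multiply the reduced observability Lyapunov equation (\ref{Reduced Observability Lyapunov Equation}) and the observability Sylvester equation (\ref{Observability Sylvester Equation}) by $v_i$. Using the relations above, each reduces to a vector equation such as $(A_r^T+\lambda_i I)(Q_{r,\tau}v_i)=-(I-e^{\lambda_i\tau}e^{A_r^T\tau})C_r^T c_i$ together with its twin for $Y_\tau v_i$; solving these and left-multiplying by $B_r^T$ and $B^T$ identifies $B_r^T Q_{r,\tau}v_i=H_{r,\tau}^T(-\lambda_i^*)c_i$ and $B^T Y_\tau v_i=-H_\tau^T(-\lambda_i^*)c_i$, since each resolvent commutes with its accompanying exponential (both being functions of the same $A_r^T$, resp.\ $A^T$). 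Their sum is precisely $\tfrac12(\nabla_{B_r}J)^T v_i$. Condition (\ref{LeftTangentialError}) is the exact mirror image: I would left-multiply (\ref{Reduced Controllability Lyapunov Equation}) and (\ref{Controllability Sylvester Equation}) by $w_i^T$ and read off $w_i^T P_{r,\tau}C_r^T$ and $w_i^T X_\tau^T C^T$ as $b_i H_{r,\tau}^T(-\lambda_i^*)$ and $-b_i H_\tau^T(-\lambda_i^*)$.

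The bitangential identity (\ref{BiTangentialError}) is where the real work lies, and I expect it to be the main obstacle. Sandwiching $(\nabla_{A_r}J)^T$ between $w_i^T$ and $v_i$ produces three contributions. The terms $w_i^T P_r Q_{r,\tau}v_i$ and $w_i^T X^T Y_\tau v_i$ require $w_i^T P_r$ and $w_i^T X^T$, which I would obtain by left-multiplying the auxiliary equations (\ref{Extra Lyapunov Equation}) and (\ref{Extra Sylvester Equation}) by $w_i^T$; combined with the already-computed $Q_{r,\tau}v_i$ and $Y_\tau v_i$ these produce \emph{squared} resolvents $(A^T+\lambda_i I)^{-2}$, exactly the $-(sI-A^T)^{-2}$ appearing in $\tfrac{d}{ds}H_\tau^T$. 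The Fréchet term is the delicate one: by (\ref{Frechet Derivative}), $w_i^T L(A_r\tau,S_\tau)v_i=\int_0^1 w_i^T e^{A_r\tau(1-s)}S_\tau e^{A_r\tau s}v_i\,ds$, and because both exponentials now act on eigenvectors of the \emph{same} eigenvalue the integrand is the constant $e^{\lambda_i\tau}w_i^T S_\tau v_i$, so the integral collapses to $e^{\lambda_i\tau}w_i^T S_\tau v_i$. Unwinding $S_\tau$ from (\ref{Stau}) via $w_i^T X^T$ and $w_i^T P_r$ then supplies precisely the $\tau e^{-s\tau}e^{A^T\tau}$ terms arising from differentiating the time-limited factor $I-e^{-s\tau}e^{A\tau}$ in (\ref{time-limited transfer function}). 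Matching all three contributions yields $b_i\tfrac{d}{ds}[H_\tau^T-H_{r,\tau}^T]\big|_{-\lambda_i^*}c_i$; the hard part is the careful differentiation of the time-limited transfer function and the bookkeeping of which resolvent/exponential factors commute.

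Finally, for the overparametrization relation (\ref{OverParametrizationError}) I would repeat the bitangential computation but sandwich between $w_i^T$ and $v_j$ with $i\neq j$. Now two distinct eigenvalues enter, so the product of resolvents splits by partial fractions, $(A_r^T+\lambda_i I)^{-1}(A_r^T+\lambda_j I)^{-1}=\tfrac{1}{\lambda_j-\lambda_i}\big[(A_r^T+\lambda_i I)^{-1}-(A_r^T+\lambda_j I)^{-1}\big]$, which manufactures the prefactor $\tfrac{1}{\lambda_i-\lambda_j}$; correspondingly the Fréchet integral no longer collapses to a constant but to the divided difference $\tfrac{e^{\lambda_i\tau}-e^{\lambda_j\tau}}{\tau(\lambda_i-\lambda_j)}$. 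Collecting terms, each resolvent piece is recognizable as one of the first-order expressions already established at $-\lambda_i^*$ and $-\lambda_j^*$, so the right-hand side reassembles as $\tfrac{1}{2(\lambda_i-\lambda_j)}\big[b_i(\nabla_{B_r}J)^T v_j-w_i^T(\nabla_{C_r}J)^T c_j\big]$, exhibiting the off-diagonal $A_r$-gradient as a consequence of the $B_r$- and $C_r$-conditions.
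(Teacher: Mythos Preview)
Your proposal is correct and follows essentially the same route as the paper's own proof: both project the gradient formulas of Theorem~\ref{Theorem 1} onto the eigenvectors $v_i$, $w_i$, collapse the Lyapunov/Sylvester equations to explicit resolvent expressions, evaluate the Fr\'echet integral on eigenvectors (constant integrand for $i=j$, divided difference for $i\neq j$), and use the partial-fraction identity for the off-diagonal case. The only cosmetic difference is that the paper names the projected vectors $x_{i,\tau}=X_\tau w_i$, $p_i=P_r w_i$, $y_{i,\tau}=Y_\tau v_i$, $q_{i,\tau}=Q_{r,\tau}v_i$, etc., before manipulating them, whereas you describe the same objects verbally.
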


\begin{proof}
Let us define $x_{i,\tau}=X_\tau w_i, x_i=X w_i,  p_{i,\tau}=P_{r,\tau}w_i, p_i=P_r w_i, y_{i,\tau} = Y_\tau v_i, q_{i,\tau}= Q_{r,\tau}v_i$. Using (\ref{Full Controllability Lyapunov Equation}), (\ref{Controllability Sylvester Equation}), (\ref{Reduced Controllability Lyapunov Equation}), (\ref{Full Observability Lyapunov Equation}), (\ref{Observability Sylvester Equation}) and (\ref{Reduced Observability Lyapunov Equation}) we get
\begin{subequations}
\begin{equation}\label{xitau}
    x_{i,\tau} = -(A+\lambda_i^* I_n)^{-1}(I_n-e^{\lambda_i^* \tau}e^{A\tau})B b_i^T
\end{equation}
\begin{equation}\label{xi}
    x_i = -(A+\lambda_i^* I_n)^{-1}B b_i^T
\end{equation}
\begin{equation}\label{pitau}
    p_{i,\tau} = -(A_r+\lambda_i^* I_r )^{-1}(I_r-e^{\lambda_i^* \tau}e^{A_r \tau})B b_i^T
\end{equation}
\begin{equation}\label{pi}
    p_i = -(A_r+\lambda_i^* I_r )^{-1}B_r b_i^T
\end{equation}
\begin{equation}\label{yitau}
    y_{i,\tau} = (A^T+\lambda_i I_n )^{-1}(I_n-e^{\lambda_i \tau}e^{A^T \tau})C^T c_i
\end{equation}
\begin{equation}\label{qitau}
    q_{i,\tau} = -(A_r^T+\lambda_i I_r)^{-1}(I_r-e^{\lambda_i \tau}e^{A_r^T \tau})C_r^T c_i
\end{equation}
\end{subequations}
We use (\ref{qitau}) and (\ref{yitau}) and obtain (\ref{RightTangnetialError}) as follows,

\begin{equation}
\begin{aligned}
    \frac{1}{2}\left(\nabla_{B_r}J \right)^T v_i &=  B_r^T q_{i,\tau}+B^T y_{i,\tau}\\
    &= -B_r^T (A_r^T+\lambda_i I_r)^{-1}(I_r-e^{\lambda_i \tau}e^{A_r^T \tau})C_r^T c_i+ \\
    &B^T(A^T+\lambda_i I_n)^{-1}(I_n-e^{\lambda_i \tau}e^{A^T \tau})C^T c_i \\
    &= [H_{r,\tau}^T(-\lambda_i^*)-H_{\tau}^T(-\lambda_i^*)]c_i
\end{aligned}    
\end{equation}
Similarly, using (\ref{pitau}) and (\ref{xitau}) the second expression becomes
\begin{equation}
\begin{aligned}
    \frac{1}{2}(\nabla_{C_r}J)w_i &= 
     [H_{r,\tau}(-\lambda_i^*)-H_{\tau}(-\lambda_i^*)]b_i^T
\end{aligned}
\end{equation}
The third expression is derived as follows.
\begin{equation}\label{thirdexp}
\begin{aligned}
&\frac{1}{2} w_i^T \left( \nabla_{A_r} J \right)^T v_j  \\
&= x_i^T y_{j,\tau}+p_i^T q_{j,\tau}+ w_i^T S_{\tau}v_j \int_{0}^{1} e^{\lambda_i(\tau-\tau s)}e^{\lambda_j \tau s}(\tau ds)
\end{aligned}
\end{equation}
Using (\ref{time-limited transfer function}), we have $\frac{d}{ds} H_{\tau}(s) = -C(s I_n-A)^{-2}(I_n-e^{-s\tau}e^{A\tau})B+ \tau e^{-s\tau}C(s I_n-A)^{-1}e^{A\tau}B$ and
$\frac{d}{ds} H_{r,\tau}(s) = -C_r(s I_r-A_r)^{-2}(I_r-e^{-s\tau}e^{A_r\tau})B_r+ \tau e^{-s\tau}C_r(s I_r-A_r)^{-1}e^{A_r\tau}B_r$. Substituting $S_{\tau}$ for $i=j$, the expression (\ref{thirdexp}) reduces to 
\begin{equation}\label{thirdexppart1}
\begin{aligned}
&x_i^T y_{i,\tau} + p_i^T q_{i,\tau} + \\
&(x_i^T e^{A^T \tau}C^Tc_i-p_i^Te^{A_r^T\tau}C_r^T c_i)\int_{0}^{1} e^{\lambda_i(\tau-\tau s)}e^{\lambda_i \tau s} \tau ds 
\end{aligned}
\end{equation}
Using (\ref{xi}) and (\ref{yitau}), the term $x_i^T y_{i,\tau}$ becomes $-b_i B^T(A^T+\lambda_i I_n)^{-2}(I_n-e^{\lambda_i \tau}e^{A^T \tau})C^T c_i$, using (\ref{pitau}) and (\ref{qitau}) the term $p_i^T q_{i,\tau}^T$ becomes $b_i B_r^T (A_r^T+ \lambda_i I_r)^{-2}(I_r-e^{\lambda_i \tau}e^{A_r^T \tau})C_r^T c_i $ and after substituting $x_i$ and $p_i$, the third term in the expression (\ref{thirdexppart1}) becomes $-\tau e^{\lambda_i \tau} b_i B^T(A^T+\lambda_i I_n)^{-1}e^{A^T \tau}C^T c_i+ \tau e^{\lambda_i \tau} b_i B_r^T(A_r^T + \lambda_i I_r)^{-1}e^{A_r^T \tau}C_r^T c_i$. Combining the three terms we get 
(\ref{BiTangentialError}).\\
For the case $i \neq j$, the expression (\ref{thirdexp}) becomes
\begin{equation}\label{thirdexppart2}
\begin{aligned}
   & \frac{1}{2}w_i^T \left( \nabla_{A_r} J \right)^T v_j \\
   &= x_i^T y_{j,\tau} + p_i^T q_{j,\tau}+w_i^T S_{\tau}v_j \left( \frac{e^{\lambda_i \tau}-e^{\lambda_j \tau}}{\lambda_i-\lambda_j} \right)
\end{aligned} 
\end{equation}
Using the identity $\left(A^T+\lambda_i I_n\right)^{-1}\left(A^T+\lambda_j I_n \right)^{-1} = \frac{1}{\lambda_i-\lambda_j}\left[ \left(A^T+\lambda_j I_n \right)^{-1} - \left(A^T+\lambda_i I_n \right)^{-1} \right]$ and (\ref{xi}), (\ref{yitau}) , the term $x_i^T y_{j,\tau}$ becomes
\begin{equation}\label{Expression1}
\begin{aligned}
-\frac{-1}{\lambda_i-\lambda_j}b_i B^T\left[ \left(A^T+\lambda_j I_n \right)^{-1} - \left(A^T+\lambda_i I_n \right)^{-1} \right]\\
\left(I_n-e^{\lambda_j \tau}e^{A^T \tau}\right)C^T c_j
\end{aligned}
\end{equation}
Similarly, the term $p_i^T q_{j,\tau}$ becomes 
\begin{equation}\label{Expression2}
\begin{aligned}
  \frac{-1}{\lambda_i-\lambda_j}b_i B_r^T\left[ \left(A_r^T+\lambda_j I_r \right)^{-1} - \left(A_r^T+\lambda_i I_r \right)^{-1} \right]\\
  \left(I_r-e^{\lambda_j \tau}e^{A_r^T \tau}\right)C_r^T c_j
\end{aligned}  
\end{equation}
The third right hand side term of (\ref{thirdexppart2}) becomes 
\begin{equation}\label{Expression3}
\begin{aligned}
    \frac{e^{\lambda_i \tau}-e^{\lambda_j \tau}}{\lambda_i-\lambda_j}\left[\left( -b_i B^T(A^T+\lambda_i I_n)^{-1}e^{A^T \tau} C^T c_j \right)\right]+ \\
  \frac{e^{\lambda_i \tau}-e^{\lambda_j \tau}}{\lambda_i-\lambda_j}  \left[ \left( b_i B_r^T(A_r^T+\lambda_i I_r)^{-1}e^{A_r^T \tau} C_r^T c_j \right)\right]
\end{aligned}
\end{equation}
Adding the expressions (\ref{Expression1}), (\ref{Expression2}) and (\ref{Expression3}), we obtain (\ref{OverParametrizationError}).
\end{proof}

The above theorem shows that setting  diag $V^{-1}\left(\nabla_{A_r} J \right)^T V$, $\left( \nabla_{B_r} J \right)^T$ and $\left( \nabla_{C_r} J \right)^T$ to 0 gives the interpolation based H$_{2,\tau}$ optimality conditions.This proves that Lyapunov based and interpolation based optimality conditions are equivalent.

\end{document}